\newcommand{\To}{\Longrightarrow}
\newcommand{\ve}{\varepsilon}
\newcommand{\tor}{\longrightarrow}
\def\ie{\emph{ i.e.}}
\newtheorem{thm}{Theorem}
\title{Computational Power of P Systems with Small Size Insertion and Deletion Rules}
\author{Alexander Krassovitskiy
\email{alexander.krassovitskiy@estudiants.urv.cat}
\institute{Rovira i Virgili University, \\
Research Group on Mathematical Linguistics,\\
Pl. Imperial T\`arraco 1, 43005 Tarragona, Spain}
\and 
Yurii Rogozhin
\email{rogozhin@math.md}
\institute{Rovira i Virgili University, \\
Research Group on Mathematical Linguistics,\\
Pl. Imperial T\`arraco 1, 43005 Tarragona, Spain}
\institute{Institute of Mathematics and Computer Science\\
Academy of Sciences of Moldova, Academiei, 5, MD-2028, Moldova}
\and
Sergey Verlan
\email{verlan@univ-paris12.fr}
\institute{Institute of Mathematics and Computer Science\\
Academy of Sciences of Moldova, Academiei, 5, MD-2028, Moldova}
\institute{
LACL, D\'epartement Informatique, Universit\'e Paris Est,\\
61 av. G\'en\'eral de Gaulle, 94010 Cr\'eteil, France }
}
\begin{document}
\maketitle

\begin{abstract}
Recent investigations show insertion-deletion systems of small size that are
not complete and cannot generate all recursively enumerable languages. However,
if additional computational distribution mechanisms like P systems are added,
then the computational completeness is achieved in some cases. In this article
we take two insertion-deletion systems that are not computationally complete, consider them in
the framework of P systems and show that the computational power is strictly
increased by proving that any recursively enumerable language can be generated.
At the end some open problems are presented.
\end{abstract}


\section{Introduction}

The operations of insertion and deletion are fundamental in formal language
theory, and generative mechanisms based on them were considered (with
linguistic motivation) for some time, see \cite{Marcus} and \cite{Galiuk}.
Related formal language investigations can be found in several places; we
mention only \cite{Kari}, \cite{KariThierrin}, \cite{MVPS}, \cite{Kluwer}. In
the last years, the study of these operations has received a new motivation
from molecular computing, see \cite{Daley}, \cite{cross}, \cite{DNA},
\cite{TY}, \cite{cfinsdel}.

In general form, an insertion operation means adding a substring to a given
string in a specified (left and right) context, while a deletion operation
means removing a substring of a given string from a specified  (left and right)
context. A finite set of insertion-deletion rules, together with a set of
axioms provide a language
generating device (
an InsDel system): starting from the set of initial strings and iterating
insertion-deletion operations as defined by the given rules we get a language.
The number of axioms, the length of the inserted or deleted strings, as well as
the length of the contexts where these operations take place are natural
descriptional complexity measures in this framework. As expected, insertion and
deletion operations with context dependence are very powerful, leading to
characterizations of recursively enumerable languages. Most of the papers
mentioned above contain such results, in many cases improving the complexity of
insertion-deletion systems previously available in the literature.

Some combinations of parameters lead to systems which are not computationally
complete~\cite{MRV07}, \cite{KRV08} or even decidable~\cite{SV2-2}. However, if
these systems are combined with the distributed computing framework of P
systems~\cite{membr}, then their computational power may strictly increase,
see~\cite{KRV08b} where non-complete insertion-deletion systems of size
(1,1,0;1,1,0) can generate any RE language, if considered in a P systems
framework. In this paper we continue investigation of P systems with
insertion-deletion and we show that P systems with insertion-deletion of size
(2,0,0;1,1,0) and (1,1,0;2,0,0) are computationally complete, while pure
insertion-deletion systems of the same size are not \cite{KRV08c}.

\section{Prerequisites}\label{sec:def}

All formal language notions and notations we use here are elementary and
standard. The reader can consult any of the many monographs in this area -- for
instance, \cite{handbook} -- for the unexplained details.

We denote by $|w|$ the length of a word $w$
and by $card(A)$ the cardinality of the set $A$.

An {\it InsDel system} is a construct $ID=(V,T,A,I,D),$ where $V$ is an
alphabet, $T\subseteq V$, $A$ is a finite language over $V$, and $I,D$ are
finite sets of triples of the form $(u,\alpha,v)$, $\alpha\ne\varepsilon$,
where $u$ and $v$ are strings over $V$ and $\ve$ denotes the empty string. The elements of
$T$ are {\it terminal} symbols (in contrast, those of $V-T$ are called
nonterminals), those of $A$ are {\it axioms}, the triples in $I$ are {\it
insertion rules}, and those from $D$ are {\it deletion rules}. An insertion
rule $(u,\alpha,v)\in I$ indicates that the string $\alpha$ can be inserted in
between $u$ and $v$, while a deletion rule $(u,\alpha,v)\in D$ indicates that
$\alpha$ can be removed from the context $(u,v)$. As stated otherwise,
$(u,\alpha,v)\in I$ corresponds to the rewriting rule $uv\to u\alpha v$, and
$(u,\alpha,v)\in D$ corresponds to the rewriting rule $u\alpha v\to uv$. We
denote by $\To_{ins}$ the relation defined by an insertion rule (formally,
$x\To_{ins} y$ iff $x=x_1uvx_2,y=x_1u\alpha vx_2$, for some $(u,\alpha,v)\in I$
and $x_1,x_2\in V^*$) and by $\To_{del}$ the relation defined by a deletion
rule (formally, $x\To_{del} y$ iff $x=x_1u\alpha vx_2,y=x_1u vx_2$, for some
$(u,\alpha,v)\in D$ and $x_1,x_2\in V^*$). We refer by $\To$ to any of the
relations $\To_{ins},\To_{del}$, and denote by $\To^*$
the reflexive and transitive closure of $\To$ (as usual, $\To^+$ is its 
transitive closure).

The language generated by $ID$ is defined by $$L(ID)=\{w\in T^*\mid x\To^* w,
 x\in A\}.$$

The complexity of an InsDel system $ID=(V,T,A,I,D)$ is traditionally described
by the vector $(n,m;p,q)$ called \emph{weight}, where
\vspace{-2mm}
\begin{eqnarray*}
&&n=\max\{|\alpha|\mid (u,\alpha,v)\in I\},\\
&&m=\max\{|u|\mid (u,\alpha,v)\in I \ {\rm or}\ (v,\alpha,u)\in I\},\\
&&p=\max\{|\alpha|\mid (u,\alpha,v)\in D\},\\
&&q=\max\{|u|\mid (u,\alpha,v)\in D \ {\rm or}\ (v,\alpha,u)\in D\},
\end{eqnarray*}

The {\it total weight} of $ID$ is the sum $\gamma=m+n+p+q$.
\smallskip

However, it was shown in~\cite{SV2-2} that this complexity measure is not
accurate and it cannot distinguish between universality and non-universality
cases (there are families having the same total weight but not the same
computational power). In the same article it was proposed to use the length of
each context instead of the maximum. More exactly,
\vspace{-2mm}
\begin{eqnarray*}
&&n=\max\{|\alpha|\mid (u,\alpha,v)\in I\},\\
&&m=\max\{|u|\mid (u,\alpha,v)\in I\},\\
&&m'=\max\{|v|\mid (u,\alpha,v)\in I\},\\
&&p=\max\{|\alpha|\mid (u,\alpha,v)\in D\},\\
&&q=\max\{|u|\mid (u,\alpha,v)\in D\},\\
&&q'=\max\{|v|\mid (u,\alpha,v)\in D\}.
\end{eqnarray*}

\noindent Hence the complexity of an insertion-deletion system will be
described by the vector
 $(n,m,m';p,q,q')$ that we call \emph{size}. We also
denote by $INS^{m,m'}_nDEL^{q,q'}_p$ corresponding families of
insertion-deletion systems. Moreover, we define the total weight of the system
as the sum of all numbers above: $\psi=n+m+m'+p+q+q'$. Since it is known
from~\cite{SV2-2} that systems using a context-free insertion or deletion of
one symbol are not powerful, we additionally require $n+m+m'\ge 2$ and
$p+q+q'\ge 2$.

If some of the parameters $n,m,m',p,q,q'$ is not specified, then we write
instead the symbol~$*$. In particular, $INS^{0,0}_*DEL^{0,0}_*$ denotes the
family of languages generated by {\em context-free InsDel systems}. If one of
numbers from the couples $m$, $m'$ and/or $q$, $q'$ is equal to zero (while the
other is not), then we say that corresponding families have a one-sided
context.


InsDel systems of a ``sufficiently large'' weight can characterize $RE$, the
family of recursively enumerable languages.


An \emph{insertion-deletion P system} is the following construct:
\begin{align*}
\Pi=(V,T,\mu,M_1,\dots{},M_n,R_1,\dots,R_n),	
\end{align*}
where

\begin{itemize}
\item $V$ is a finite alphabet,
\item $T\subseteq V$ is the terminal alphabet,
\item $\mu$ is the membrane (tree) structure of the system which has $n$
    membranes (nodes). This structure will be represented by a word
    containing correctly nested marked parentheses.
\item $M_i,$ for each $1 \le i \le n$ is a finite language associated to
    the membrane $i$.
\item $R_i,$ for each $1 \le i \le n$ is a set of insertion and deletion
    rules with target indicators associated to membrane $i$ and which have
    the following form: $(u,x,v;tar)_a$, where $(u,x,v)$ is an insertion rule,
    and $(u,x,v;tar)_e$, where $(u,x,v)$ is an deletion rule,  and $tar$,
    called the \emph{target indicator}, is from the set $\{here,in,out\}$.
\end{itemize}


Any m-tuple $(N_1, \dots{},  N_n)$ of languages over V is called a
configuration of $\Pi$. For two configurations $(N_1, \dots{},
N_n)$ and $(N_1',\dots{}, N_n')$ of $\Pi$  we write $(N_1,
\dots{}, $ $ N_n) \To (N_1', \dots{} , N_n')$ if we can pass from
$(N_1, \dots{},  N_n)$ to $(N_1', \dots{}, N_m')$ by applying the
insertion and deletion rules from each region of $\mu$, in maximally parallel way,
\ie, in parallel
to all possible strings from the corresponding regions,
and following the target indications associated with the rules.
We assume that every string represented in membrane has arbitrary many copies.
Hence, by applying a rule to a string we get both arbitrary many copies
of resulted string as well as old copies of the same string.

More specifically, if $w \in M_i$ and $r = (u,x,v;tar)_a\in R_i$,
respectively  $r = (u,x,v;tar)_e\in R_i$, such that $w \To^r_{ins}
w'$, respectively $w \To^r_{del} w'$,  then $w'$ will go to the
region indicated by $tar$. If $tar = here$, then the string
remains in $M_i$, if $tar = out$, then the string is moved to the
region immediately outside the membrane $i$ (maybe, in this way
the string leaves the system), if $tar = in$, then the string is
moved to the region immediately below.


A sequence of transitions between configurations of a given insertion-deletion
P system $\Pi$, starting from the initial configuration $(M_1, \dots, M_n)$, is
called a computation with respect to $\Pi$. The result of a computation
consists of all strings over $T$ which are sent out of the system at any time
during the computation. We denote by $L(\Pi)$ the language of all strings of
this type. We say that $L(\Pi)$ is generated by $\Pi$.

We denote by $ELSP_k(insdel, (n,m,m';p,q,q'))$(see, for example \cite{membr})
the family of languages $L(\Pi)$ generated by insertion-deletion P systems of
degree at most $k, k \ge 1$ having the size $(n,m,m';p,q,q')$.

\section{Main results}

\begin{thm}\label{thm:MEMBR_RE110200}
 $ELSP_5(insdel, (1,1,0; 2,0,0)) = RE$.
\end{thm}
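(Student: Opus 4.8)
The plan is to simulate a type-0 grammar given in (a suitable variant of) Geffert's normal form. Recall that every recursively enumerable language is generated by such a grammar whose context-free rules have the form $X \to bY$ (with $X,Y$ nonterminals and $b$ either a terminal or one of four distinguished symbols $A,B,C,D$), together with terminating rules $X \to \ve$ and exactly two non-context-free rules $AB \to \ve$ and $CD \to \ve$; moreover the sentential forms carry a single ``active'' nonterminal, and the two erasing rules may be applied in an arbitrary, unordered fashion without changing the generated language. The point of this normal form is that its resources match ours almost exactly: the erasing rules $AB \to \ve$ and $CD \to \ve$ are implemented \emph{verbatim} by the context-free two-symbol deletions $(\ve, AB, \ve)_e$ and $(\ve, CD, \ve)_e$ available to us (this is where $p=2$ is essential), so the whole difficulty is concentrated in simulating the rules $X \to bY$ with insertions of a single symbol having one-sided left context and with two-symbol deletion. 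The five membranes will act as a finite control that orders these elementary operations and filters out faulty derivations.

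For a context-free rule $p\colon X \to bY$ I would introduce a rule-specific marker symbol $[p]$ and simulate $uXv \To ubYv$ by the following schedule, each step carried out in its own membrane and passing the string one level inward: first insert the marker using $X$ as left context, $(X,[p],\ve;in)_a$, giving $uX[p]v$; then insert $Y$ and $b$ to the right of the marker, via $([p],Y,\ve;in)_a$ and $([p],b,\ve;in)_a$, giving $uX[p]bYv$; finally delete the adjacent pair $X[p]$ with the context-free rule $(\ve,X[p],\ve;out)_e$, leaving $ubYv$. The key idea is that although deletion is context-free, the two-symbol deletion of $X[p]$ only fires when $X$ and $[p]$ are \emph{adjacent}, while $[p]$ is inserted only immediately after an $X$; thus the pair-deletion supplies exactly the context that single-symbol deletion lacks. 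Since a sentential form of the grammar carries a single active nonterminal, the $X$ to be rewritten is unique and no spurious occurrence of the pattern $X[p]$ can arise elsewhere. The same four membranes are reused for all context-free rules, each rule contributing its own marker, so the number of membranes does not grow with the grammar.

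Soundness is enforced by the usual trap/garbage argument. The markers $[p]$ are nonterminals, so any string still containing a marker is not over $T$ and, even if it is expelled from the system, is not counted toward $L(\Pi)$. I would therefore arrange the routing so that every deviation from the intended order of steps (for instance, attempting the pair-deletion before $Y$ and $b$ have been inserted, or starting a second rule before the current marker has been removed) strands a marker that can never again become adjacent to an $X$, hence can never be deleted, so the offending string can never be reduced to a terminal word. Conversely, the intended schedule realizes exactly one grammar step, so every derivation of $G$ is reproduced and its terminal word is eventually sent out of the skin membrane. This yields $RE \subseteq ELSP_5(insdel,(1,1,0;2,0,0))$; the reverse inclusion is immediate, since insertion-deletion P systems generate only recursively enumerable languages.

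The step I expect to be the main obstacle is precisely this control argument rather than any individual rule. Because strings have arbitrarily many persistent copies and all rules act in a maximally parallel way, I must verify that across \emph{every} interleaving permitted by the semantics the only markers that can ever be erased are the intended adjacent pairs, and that the unordered, context-free application of $AB\to\ve$ and $CD\to\ve$ remains faithful to Geffert's normal form. Getting the membrane nesting and the $\{here,in,out\}$ targets right so that a half-applied rule is genuinely trapped, while a completed terminal word does leave the system and no incorrect word does, is the delicate bookkeeping on which the whole proof rests.
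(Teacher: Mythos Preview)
Your approach via Geffert's normal form is a genuine alternative to the paper's, which instead simulates a grammar in Penttonen normal form. The difference lies in where the two-symbol context-free deletion does its real work: you use it first of all for the erasing rules $AB\to\ve$ and $CD\to\ve$, which become one-line rules $(\ve,AB,\ve;here)_e$, $(\ve,CD,\ve;here)_e$ in the skin, whereas the paper has no such rules and must simulate the Penttonen production $AB\to AC$ by a multi-membrane back-and-forth. In exchange, the paper treats its context-free productions $A\to BC$ and $A\to\alpha$ with essentially the same marker trick you sketch for $X\to bY$. Your route is conceptually cleaner on the non-context-free side, and the ``single active nonterminal'' property of Geffert's form gives you a tidy uniqueness argument; the paper's route is more explicit about the membrane traffic and needs no such global invariant.

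There is, however, one concrete point where your four-step schedule for $X\to bY$ does not close up, and it is not merely bookkeeping. Three $in$-insertions followed by one $out$-deletion leave the string two levels below the skin with no marker to trigger any further rule, so it never returns. Rearranging into two $in$ and two $out$ steps on the path $1\to2\to3\to2\to1$ forces the deletion $(\ve,X[p],\ve;out)_e$ into membrane~2, where it is already applicable on the \emph{first} visit (right after $[p]$ is inserted) and would erase $X$ before $b$ and $Y$ have been produced --- exactly the unsound branch your trap argument must exclude. With a single marker this is unavoidable: once $[p]$ is gone nothing distinguishes ``forward'' from ``return'', and while $[p]$ is present $X[p]$ is always deletable. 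The fix is the one the paper uses for $A\to BC$: introduce one or two further rule-specific markers so that the return rules in the intermediate membranes require symbols that only appear after the innermost membrane has been reached, giving a round trip $1\to2\to3\to4(\to5)\to\cdots\to1$ of six or eight steps, still within five membranes. So plan on extra markers from the outset; four operations with a single $[p]$ will not suffice. (You will also need a dedicated rule, as the paper has, to push the finished terminal word out of the skin.)
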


\begin{proof}
We prove the inclusion $$ELSP_5(insdel, (1,1,0; 2,0,0)) \supseteq RE$$ by
simulating a type-0 grammar in Penttonen normal form by the means of
insertion-deletion systems. The reverse inclusion $$ELSP_5(insdel, (1,1,0;
2,0,0))\subseteq RE$$ is obvious as it follows from the Church thesis.
\medskip

Let $G = (N, T, S, R)$ be a type-0 grammar in Penttonen normal form.
This means that all production rules in $R$ are of the form:
\begin{align*}
AB \tor AC &\mbox{~or~}
\\
A \tor BC &\mbox{~or~}
\\
A \tor \alpha
\end{align*}
\noindent where $A, B$ and $C$ are from $N$ and $\alpha \in
T\cup N \cup\{ \ve \}.$ Suppose that rules in $R$ are ordered and
$n=card(R)$.

\noindent Now consider the following system.

\noindent  $\Pi_1 = (V, T, [_1~[_2~[_3~[_4~[_5~]_5~]_4~]_3~]_2~]_1, \{SX\},
\emptyset, \emptyset, \emptyset, \emptyset,  R_1, R_2, R_3, R_4, R_5)$.

\noindent It has a new nonterminal alphabet $V = N\cup T \cup \overline{P}
\cup \{ X \}, \overline{P} = \{P^j_i |i=1, \ldots ,n, \ j=1,
\ldots, 4 \}.$

\begin{itemize}
\item

For every production $i: AB \tor AC$ from $R$ with
$A, B, C \in N$ we add following rules to $R_1, \ldots, R_4$
correspondingly (we do not use membrane 5 in this case):
\begin{align*}
&(A, P^1_i, \ve; in)_a \mbox{~to~} R_1;
\\
&(P^1_i, P^2_i, \ve ;in)_a \mbox{~and~} (\ve, P^1_iP^3_i,\ve
;out)_e \mbox{~to~} R_2;
\\
&(\ve, P^2_iB, \ve ;in)_e \mbox{~and~} (P^3_i, C,\ve ;out)_a
\mbox{~to~} R_3;
\\
&(P^1_i, P^3_i, \ve ;out)_a \mbox{~to~} R_4;
\end{align*}

\item
 For every production $i: A \tor BC$ from $R$ where $A,
B, C \in N$ we add rules:
\begin{align*}
&(A, P^1_i, \ve; in)_a \mbox{~to~} R_1;
\\
&(P^1_i, P^2_i, \ve ;in)_a \mbox{~and~} (\ve, P^2_i,\ve ;out)_e \mbox{~to~} R_2;
\\
&(P^1_i, B, \ve ;in)_a \mbox{~and~} (\ve, P^3_i, \ve ;out)_e
\mbox{~to~} R_3;
\\
&(\ve, AP^1_i, \ve ;in)_e \mbox{~and~} ( P^3_i, C, \ve ;out)_a \mbox{~to~} R_4;
\\
&(P^2_i, P^3_i, \ve; out)_a \mbox{~to~} R_5.
\end{align*}

\item
 For every production $i: A \tor \alpha$ from $R$ where
$A\in N, \alpha \in T\cup N$ we add following rules to $R_1, \ldots, R_4$
correspondingly (we do not use membrane 5 in this case):
\begin{align*}
&(A, P^1_i, \ve; in)_a \mbox{~to~} R_1;
\\
&( P^1_i,\alpha, \ve ;in)_a \mbox{~and~} (\ve, P^2_iP^3_i, \ve ;out)_e
\mbox{~to~} R_2; \\
&(P^1_i, P^2_i, \ve ;in)_a \mbox{~and~} (P^2_i,P^3_i,\ve ;out)_a \mbox{~to~} R_3;
\\
&(\ve, AP^1_i, \ve ;out)_e \mbox{~to~} R_4;
\end{align*}

\item
For every production $i: A \tor \ve$ from $R$ with $A\in N$ we add
rules $(\ve, A, \ve; here)_e$ to $R_1$.
\smallskip

\item
Finally, we add to $R_1$ rule $(\ve,X,\ve; out)_e$.
\end{itemize}

\medskip

We claim that $\Pi_1$ generates the same language as $G$. In fact it
is enough to proof that every step in derivation by grammar $G$
can be simulated in $\Pi_1$.
\medskip

Let us consider production
$i: AB \tor AC \in R$.
\smallskip

The simulation of this rule is controlled by symbols $P^1_i$, $P^2_i$ and
$P^3_i$. We assume that the sentential form in the skin membrane does not
contain symbols from $\overline{P}$. Consider a string $w_1 AB w_2$ in the skin
region. We insert $P^1_i$ after symbol $A: \ w_1 AB w_2 \To w_1  AP^1_iB w_2$
and send the obtained string to membrane 2. Here we insert $P^2_i$ after symbol
$P^1_i: \ w_1  AP^1_iB w_2 \To w_1  AP^1_iP^2_iB w_2$ and send the string to
membrane 3. Next we delete substring $P^2_iB: \ w_1A P^1_iP^2_iB w_2 \To w_1
AP^1_i w_2$ and send the obtained string to membrane 4. Here we insert $P^3_i$
after $P^1_i: \ w_1 AP^1_i w_2 \To w_1 AP^1_i P^3_i w_2$ and push  the string
to membrane 3. Now we insert symbol $C$ after  $P^3_i: w_1 AP^1_i P^3_i w_2 \To
w_1 AP^1_iP^3_iC w_2$ pushing the string to membrane 2. Now we have two
possibilities: to delete substring $P^1_iP^3_i$ and push the result $w_1ACw_2$
to the skin membrane (thus we simulate rule $i: AB \tor AC \in R$ correctly),
or to insert symbol $P^2_i$ after $P^1_i$ and send  string
$w_1AP^1_iP^2_iP^3_iCw_2$ to membrane 3, where symbol $C$ will be inserted and
the string comes back to membrane 2. So, we have a circle of computation in
membrane 2 and 3. Notice, that between symbols $P^1_i$ and $P^3_i$ there is at
least one symbol $P^2_i$, and therefore there is no possibility to apply rule
$(\ve, P^1_iP^3_i, \ve; out)_e$ and to enter at the skin membrane. So, this
branch of computation cannot influence the result and may be omitted in the
consideration.
\medskip

Let us consider production  $i : A \tor BC$, where $A,B,C \in N$.
\smallskip

The simulation of this rule is controlled by symbols $P^1_i$, $P^2_i$ and
$P^3_i$. We can also assume that the sentential form in the skin membrane does
not contain symbols from $\overline{P}$. Consider a string $w_1 AB w_2$ in the
skin region. We insert $P^1_i$ after symbol $A: \ w_1 AB w_2 \To w_1 AP^1_iB
w_2$ and send the obtained string to membrane 2. Here we insert $P^2_i$ after
symbol $P^1_i: \ w_1  AP^1_iB w_2 \To w_1  AP^1_iP^2_iB w_2$ and send the
string to membrane 3. Here we insert symbol $B$ after $P^1_i: \ w_1A P^1_iP^2_i
w_2 \To w_1 AP^1_iBP^2_i w_2$ and send the obtained string to membrane 4. Here
we delete substring $AP^1_i :  \ w_1 AP^1_iBP^2_iw_2 \To w_1 B P^2_i w_2$ and
send the string to membrane 5. Now we insert symbol $P^3_i$ after  $P^2_i: w_1
BP^2_i w_2 \To w_1 BP^2_iP^3_i w_2$ and push the string to membrane 4. Here we
insert symbol $C$ after symbol $P^3_i : w_1BP^2_iP^3_iw_2 \To
w_1BP^2_iP^3_iCw_2$ and push the string to membrane 3. Here we delete symbol
$P^3_i$ and push the string to membrane 2: $w_1BP^2_iP^3_iCw_2 \To
w_1BP^2_iCw_2$. At last we delete symbol $P^2_i$ and the result $w_1BCw_2$
enters at the skin region. So, we simulate rule $i :  \ A \tor BC$ correctly.
\medskip

Simulation of production  $i : \ A \tor \alpha$, where $A \in N$ and $\alpha \in N \cup T$
is done in an analogous manner.
\smallskip

Every $\ve$-production $i: A \tor \ve$, $A \in N$ is simulated directly in the skin
membrane by the corresponding rule $(\ve,A,\ve; here)_e$.

According to the definition of insertion-deletion P systems the
result of a computation consists of all strings over $T$ which are
sent out of the system at any time during the computation. This is
formally provided by the rule $(\ve,X,\ve; out)_e$ in the skin
membrane. This rule uses conventional notation from \cite{membr}.
Indeed, assume a sentential form $wX$ appears in the skin
membrane for some $w \in T^*$ (as we stared from the axiom $SX$).
Then, applying the rule $(\ve,X,\ve; out)_e$ we assure that $w$ is in $L(\Pi_1)$.
\medskip

To claim the proof we observe that every correct sentential form
has at most one symbol $P^1_i$, $P^2_i$ or $P^3_i$, $i=1,\ldots,n.$ And after
insertion of $P^1_i$ in the skin membrane either all rules
corresponding to $i$-th rule have to be applied (in the defined
order) or the derivation is blocked. Hence, we have $L(G) =
L(\Pi_1).$
\end{proof}

\begin{thm}\label{thm:MEMBR_RE200110}
 $ELSP_5(insdel, (2,0,0; 1,1,0)) = RE$.
\end{thm}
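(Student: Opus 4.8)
The plan is to mirror the proof of Theorem~\ref{thm:MEMBR_RE110200}, again simulating a type-0 grammar $G=(N,T,S,R)$ in Penttonen normal form by a $5$-membrane insertion-deletion P system $\Pi_2$, with initial string $SX$ in the skin membrane and terminal output produced by the rule $(\ve,X,\ve;out)_e$. The inclusion $ELSP_5(insdel,(2,0,0;1,1,0))\subseteq RE$ is again immediate from the Church thesis, so only $\supseteq RE$ has to be shown. For each production $i$ I would route the auxiliary control symbols $P^j_i$ through the membranes $1,\dots,5$ in a fixed order, exactly as before, so that once the first marker for rule $i$ is introduced, either the whole block of rules for $i$ is applied in the prescribed sequence or the derivation is blocked.

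The essential difference from Theorem~\ref{thm:MEMBR_RE110200} is that the two operations exchange their roles. In size $(1,1,0;2,0,0)$ the context-sensitive operation was insertion (one symbol, left context one) while the context-free operation deleted a fixed two-symbol block; in size $(2,0,0;1,1,0)$ it is deletion that carries a left context (one symbol, deleting one symbol), while insertion is context-free but may add a two-symbol block at once. Accordingly, every step of the previous construction of the form ``insert $Y$ after $X$'', realized there by $(X,Y,\ve;tar)_a$, would now be recast as a context-free two-symbol insertion together with a context-dependent deletion used as a checkpoint: to place a marker $P$ next to an anchor $A$ I would context-free insert a two-symbol block carrying $P$ and a helper symbol, and then remove the helper by a context-dependent deletion whose left context forces $P$ to sit where it is needed. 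Dually, the two-symbol context-free deletions of the first proof (for instance removing $P^2_iB$ or $P^1_iP^3_i$) would be replaced by successive one-symbol context-dependent deletions, each preceding symbol supplying the left context that licenses removal of the next. The production $A\tor\ve$ is still handled directly in the skin by $(\ve,A,\ve;here)_e$.

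The main obstacle, and the place where this proof genuinely departs from the first one, is the non-determinism of context-free insertion. Since a two-symbol block may be inserted at an arbitrary position, I cannot guarantee a priori that a marker lands adjacent to its intended anchor, whereas in Theorem~\ref{thm:MEMBR_RE110200} the left context of the insertion rule pinned the marker down immediately. I would therefore design the context-dependent deletion rules so that they act as filters: a marker inserted in a wrong position can never be removed, because its required left context is absent, and hence every such string permanently retains a nonterminal symbol of $\overline{P}$ and can never be sent out as a terminal word. This is the analogue of the ``circle of computation in membranes~$2$ and~$3$'' argument of the first proof, and verifying that exactly the correctly placed branches survive is the delicate part of the argument. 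Once this filtering is established for each of the three nontrivial production types, the equality $L(\Pi_2)=L(G)$ follows as in Theorem~\ref{thm:MEMBR_RE110200}, giving $ELSP_5(insdel,(2,0,0;1,1,0))\supseteq RE$ and hence the theorem.
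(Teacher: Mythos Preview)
Your proposal is correct and matches the paper's approach essentially point for point: the paper also simulates a Penttonen-form grammar in a linearly nested $5$-membrane system with axiom $SX$, uses context-free two-symbol insertions (such as $P^1_iP^2_i$, $P^3_iC$, $BP^3_i$) to introduce markers nondeterministically, and relies on one-symbol left-context deletions (such as $(P^2_i,B,\ve)_e$, $(A,P^1_i,\ve)_e$, $(A,P^2_i,\ve)_e$) both to effect the rewriting and to verify adjacency, with $A\to\ve$ handled directly and output via $(\ve,X,\ve;out)_e$. Your identification of the filtering mechanism---misplaced insertions leave the string stuck in an inner membrane or carrying a surviving $\overline{P}$ symbol---is exactly the correctness argument the paper makes.
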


\begin{proof}
We prove the inclusion
$$ELSP_5(insdel, (1,1,0; 2,0,0)) \supseteq RE$$
by simulating a type-0 grammar in Penttonen normal form. The reverse inclusion
$$ELSP_5(insdel, (1,1,0; 2,0,0))\subseteq RE$$ follows from the Church thesis.
\smallskip

Let  $G = (N, T, S, R)$ be a type-0 grammar in Penttonen normal form with
production rules $R$ are of type:
\begin{align*}
AB \tor AC &\mbox{~or~}
\\
A \tor BC &\mbox{~or~}
\\
A \tor \alpha
\end{align*}
\noindent where $A, B, C$ and $D$ are from $N$ and $\alpha \in T\cup N \cup\{
\ve \}.$ Suppose that rules in $R$ are ordered and $n=card(R)$.

\noindent Now consider the following system.

\noindent  $\Pi_2 = (V, T, [_1~[_2~[_3~[_4~[_5~]_5~]_4~]_3~]_2~]_1,
\{SX\}, \emptyset, \emptyset, \emptyset, \emptyset, R_1, R_2, R_3,
R_4, R_5)$.

\noindent It has a new nonterminal alphabet $V = N\cup T \cup
\overline{P} \cup \{ X \}$, $\overline{P} = \{P^j_i |i=1, \ldots ,n, \ j=1,
\ldots, 5 \}.$

\begin{itemize}
\item
For every production $i: AB \tor AC$ from $R$ with $A, B, C \in N$
we add following rules to $R_1, \ldots, R_4$ correspondingly:
\begin{align*}
&(\ve, P^1_i P^2_i, \ve; in)_a \mbox{~to~} R_1;
\\
&(P^2_i, B,\ve ;in)_e \mbox{~and~} (A, P^3_i,\ve ;out)_e
\mbox{~to~} R_2;
\\
&(\ve, P^3_i C, \ve ;in)_a \mbox{~and~} (A, P^2_i,\ve ;out)_e
\mbox{~to~} R_3;
\\
&(A, P^1_i, \ve ;out)_e;
\end{align*}

\item
For every production $i: A \tor BC$ from $R$ with $A, B, C \in N$
we add following rules to $R_1, \ldots, R_5$ correspondingly:
\begin{align*}
&(\ve, P^1_i P^2_i, \ve; in)_a \mbox{~to~} R_1;
\\
&(P^2_i, A,\ve ;in)_e \mbox{~and~} (\ve, P^3_i,\ve ;out)_e
\mbox{~to~} R_2;
\\
&(\ve, B P^3_i, \ve ;in)_a \mbox{~and~}(P^3_i, P^2_i,\ve ;out)_e
\mbox{~to~} R_3;
\\
&(P^3_i, P^1_i, \ve ;in)_e \mbox{~and~} (P^2_i, P^4_i,\ve ;out)_e
\mbox{~to~} R_4;
\\
&(\ve, P^4_iC, \ve ;out)_a \mbox{~to~} R_5;
\end{align*}

\item
For every production $i: A \tor \alpha$ from $R$ with $A \in N, \alpha \in N\cup T$
we add following rules to $R_1, \ldots, R_4$:
\begin{align*}
&(\ve, \alpha P^3_i, \ve; in)_a \mbox{~to~} R_1;
\\
&(P^3_i, A,\ve ;in)_e \mbox{~and~} (\alpha, P^2_i,\ve ;out)_e
\mbox{~to~} R_2;
\\
&(\ve, P^1_i P^2_i, \ve ;in)_a \mbox{~and~} (\alpha, P^1_i,\ve ;out)_e
\mbox{~to~} R_3;
\\
&(\alpha, P^3_i, \ve ;out)_e;
\end{align*}

\item
For every production $i: A \tor \ve$ from $R$ with $A \in N$ we
add the following rule to $R_1$: $(\ve, A, \ve; here)_e.$
\smallskip

\item
Finally, we add to $R_1$ the rule $(\ve,X,\ve; out)_e$.
\end{itemize}

Now we claim that $\Pi_2$ generates the same language as $G$. We show
that every step in derivation by grammar $G$
can be simulated in $\Pi_2$.
\smallskip

Let us consider  production
$i: AB \tor AC \in R$.
\smallskip

The simulation of this rule is controlled
by symbols $P^1_i$, $P^2_i$ and $P^3_i$.
As in the previous theorem, we assume that sentential
form in the first membrane does not contain symbols from
$\overline{P}$.
Insertion of two symbols $P^1_i P^2_i$ sends the
sentential form to the second membrane. As at this moment there are no
symbols $P^3_i$ the only possible rule to be applied is $(P^2_i,
B,\ve ;in)_e$. It assumes the presence of $B$ on the right of $P^2_i$.
This rule sends the sentential form to the third membrane. At this
moment we can only apply the insertion $(\ve, P^3_i C, \ve ;in)_a$
which sends the form to the forth membrane (hence $(A, P^2_i,\ve
;out)_e$ requires symbol $A$ on the right from $P^2_i$). In the
forth membrane we can apply the deletion rule $(A, P^1_i, \ve ;out)_e$
only if the first insertion $P^1_i P^2_i$ was done between $A$ and
$B$. Now we are pushed back to the third membrane. Here we have
two options. The first option is to repeat the insertion $(\ve,
P^3_i C, \ve ;in)_a$. The derivation will be blocked in the next
step as there is no symbols $P^1_i$ anymore. The second option is to apply
$(A, P^2_i,\ve ;out)_e$. This
is always possible since symbol $P^2_i$ appears adjacently right
from $A$. This sends the sentential form to the second membrane.
At this moment the sentential form does not contain any symbols
from $\overline{P}$ except for $P^3_i$. And we can apply the
deletion rule $(A, P^3_i,\ve ;out)_e$ assuming $P^3_i C$ is
inserted adjacently right from $P^1_i P^2_i$.

Hence, the only possible derivation by using the rules above is the
following:
\begin{align*}
& w_1ABw_2\To w_1AP^1_iP^2_iBw_2 \To w_1AP^1_iP^2_iw_2 \To \\
& w_1AP^1_iP^2_i P^3_i Cw_2 \To w_1A P^2_i P^3_i Cw_2 \To \\
& w_1AP^3_i Cw_2 \To w_1A Cw_2.
\end{align*}
One can see that this derivation correctly simulates the rule $i:
AB \tor AC$.
\smallskip

Now we consider a context-free rule $i: A \tor BC$, where $A,B,C
\in N$.
\smallskip

The simulation of the rule is controlled
by symbols $P^1_i$, $P^2_i$, $P^3_i$ and $P^4_i$.
The rule $(\ve, P^1_i P^2_i, \ve; in)_a$ inserts $P^1_i P^2_i$ and
sends the sentential form to the second membrane. In the second
membrane deletion rule $(P^2_i, A,\ve ;in)_e$ is applicable if
$P^1_i P^2_i$ is inserted adjacently left from $A$. It
sends the form to the third membrane. Here, only insertion rule
$(\ve, B P^3_i, \ve ;in)_a$ is applicable as at this moment there
are no symbols $P^3_i$ yet. It sends the form to the forth
membrane. Here we can only delete  $P^1_i$ as the rule
$(P^2_i,P^4_i, \ve ;out)_e$ cannot be applied.
In the fifth membrane we insert $P^4_iC$ and the sentential form
is pushed back to the forth membrane. At this step we can only
remove $P^4_i$ and send the string to membrane 3. Now we have two
possibilities: either insertion rule $(\ve, BP^3_i, \ve; in)_a$ or
deletion rule $(P^3_i,P^2_i,\ve; out)_e$ can be applied. In the
first case the derivation will be blocked in membrane 4, as no
rules may be applied to the string. In the second case  symbol
$P^2_i$ will be deleted and the string enters at membrane 2. Here
symbol $P^3_i$ will be deleted and the result $w_1BCw_2$ appears
at the skin membrane.

Hence, the only possible derivation by using these rules is the
following:
\begin{align*}
& w_1Aw_2\To w_1P^1_iP^2_iAw_2 \To w_1P^1_iP^2_iw_2 \To \\
& w_1 BP^3_iP^1_iP^2_i w_2 \To w_1 BP^3_iP^2_iw_2 \To \\
& w_1 BP^3_i P^2_iP^4_iCw_2 \To w_1 BP^3_iP^2_iC w_2 \To \\
& w_1 B P^3_iC w_2 \To w_1 B C w_2.
\end{align*}

So, we simulate rule $i: A \tor BC$ correctly.

Now, consider production $i: A \tor \alpha$ from $R$ with $A \in N, \alpha \in N\cup T$.
This case of replacement basically uses one insertion of $\alpha P^3_i$
adjacently left from $A$, and two deletion rules $(P^3_i, A,\ve
;in)_e$ and $(\alpha, P^3_i, \ve ;out)_e$. But, hence, the total number
of insertion-deletion rules for every production has to be even,
we introduce one additional insertion $(\ve, P^1_i P^2_i, \ve
;in)_a$ and two deletion rules $(\alpha, P^1_i,\ve ;out)_e$, and  $(\alpha,
P^2_i,\ve ;out)_e$.

The derivation for this case has the following form:
\begin{align*}
& w_1Aw_2\To w_1\alpha P^3_iAw_2 \To w_1\alpha P^3_iw_2 \To w_1\alpha P^3_i
P^1_iP^2_i w_2 \To \\
&w_1 \alpha P^1_iP^2_i w_2 \To w_1 \alpha P^2_i w_2 \To w_1 \alpha w_2
\end{align*}

So, we simulate rule $i: A \tor \alpha$ correctly.

Every $\ve$-production $i: A \tor \ve$, $A \in N$ is simulated directly in the skin
membrane by the corresponding rule $(\ve,A,\ve; here)_e$. Finally, the rule
$(\ve,X,\ve; out)_e$ is applied
to $wX$ in the skin membrane, where $w \in T^*$ and $X$ is from the axiom $SX$.
Here, we use the same technique as in the previous theorem.
This rule is needed in order to terminate derivation and sent the
resulting string as an output of the system.

In order to finish the proof we observe that every correct sentential form
preserves the following properties:
\begin{enumerate}
\item No symbol from $\overline{P}$ presents in the skin membrane.
\item If some symbol
from $\overline{P}$ appears more than once in the sentential form
than the derivation is blocked on this production.
\end{enumerate}

As shown before
insertion of $P^1_i P^2_i$ or $B P^3_i$ for the corresponding
$i-th$ rule in the skin membrane results to either all rules
corresponding to $i$-th rule have to be applied (in the defined
order) or the derivation is blocked. Hence, we have $L(G) =
L(\Pi_2).$
\end{proof}


\section{Conclusions}\label{sec:conclusions}

In this article we have investigated P systems based on small size
insertion-deletion systems. We proved two universality results, namely that
insertion-deletion P systems with 5 membranes of size (2,0,0;1,1,0) and
(1,1,0;2,0,0) are computationally complete. At the same time, pure
insertion-deletion systems of the same size are not computationally complete. We guess that
their computational power is rather small, but its precise characterizations is
an open question. Another interesting question is whether the number of
membranes used in the proof of Theorems~\ref{thm:MEMBR_RE200110}
and~\ref{thm:MEMBR_RE110200} is minimal.

Finally, we would like to mention an interesting decidable class of
insertion-deletion systems: systems of size $(2,0,0; 2,0,0)$. We think that P
systems with rules from this class will still not be able to generate any
recursively enumerable language.

\section*{Acknowledgments} The first author acknowledges the
grant of Ramon y Cajal from University Rovira i Virgili 2005/08
and grant no. MTM 63422 from the Ministry of Science and Education
of Spain. The second author acknowledges the support of European
Commission, project MolCIP, MIF1-CT-2006-021666. The second and
the third author acknowledge the Science and Technology Center in
Ukraine, project 4032.

\bibliographystyle{eptcs}

\end{document}